\newcommand{\dd}{\mathrm{d}}
\newtheorem{theorem}{Theorem}[section]
\newtheorem{lemma}[theorem]{Lemma}
\newtheorem{remark}[theorem]{Remark}
\newtheorem*{remarks}{Remarks}
\newcommand{\R}{{\mathord{\mathbb R}}}
\begin{document}
\title[Indirect Coulomb Energy for Two-Dimensional Atoms]{\textbf{Indirect Coulomb Energy for Two-Dimensional Atoms}}

\author[Benguria]{Rafael D. Benguria$^1$}

\author[Tu\v{s}ek]{Mat\v{e}j Tu\v{s}ek$^2$}

\address{$^1$ Departmento de F\'\i sica, P. Universidad Cat\' olica de Chile,}
\email{\href{mailto: rbenguri@fis.puc.cl}{rbenguri@fis.puc.cl}}

\address{$^2$ Departmento de F\'\i sica, P. Universidad Cat\' olica de Chile, }
\email{\href{mailto: mtusek@fis.puc.cl}{mtusek@fis.puc.cl}}


\smallskip
\begin{abstract}
In this manuscript we provide a family of lower bounds on the indirect Coulomb energy 
for atomic and molecular systems in two dimensions in terms of a functional of the single 
particle density with gradient correction terms. 
\end{abstract}

\maketitle

\section{Introduction}

Since the advent of quantum mechanics, the
impossibility of solving exactly problems involving
many particles has been clear. These problems are
of interest in such areas as atomic and molecular
physics, condensed matter physics, and nuclear
physics. It was, therefore, necessary from the early
beginnings to estimate various energy terms of a system 
of electrons as functionals of the single particle density $\rho_{\psi}(x)$, 
rather than as functionals of their wave function $\psi$. 
The first estimates of this type were obtained by 
Thomas and Fermi in 1927 (see \cite{Li81} for a review), and by now they have 
given rise to a whole discipline under the name of {\it Density Functional Theory} (see, e.g., 
\cite{Be13} and references therein). 
In Quantum Mechanics of many particle
systems the main object of interest is the wavefunction $\psi \in \bigwedge^N L^2({\R}^3)$, 
(the antisymmetric tensor product of $L^2({\R}^3)$). 
More explicitly, for a system of $N$ fermions, $\psi(x_1, \dots , x_i,  \dots, \dots x_ j, \dots x_N)= - 
\psi(x_1, \dots , x_j,  \dots, \dots x_ i, \dots x_N)$, in view of Pauli's Exclusion Principle, and
$\int_{\R^N} |\psi|^2 \, dx_1 \dots d x_n=1$.  Here, $x_i \in \R^3$ denote the coordinates of 
the $i$-th particle. 
From the wavefunction $\psi$ one can define the one--particle density (single particle density) as
\begin{equation}
\rho_{\psi}(x) = N \int_{\R^{3(N-1)}} |\psi (x, x _2, \dots, x_N)|^2 \, dx_2 \dots dx_N,
\label{density}
\end{equation}
and from here it follows that $\int_{\R^3} \rho_{\psi}  (x) \, dx = N$, the number of particles, and 
$\rho_{\psi}(x)$ is the density of particles at $x \in \R^3$. Notice that since $\psi$ is antisymmetric, $|\psi|^2$ is symmetric, 
and it is immaterial which variable is set equal to $x$ in (\ref{density}).

In Atomic and Molecular Physics, given that the expectation value 
of the Coulomb attraction of the electrons by the nuclei can be expressed 
in closed form in terms of $\rho_{\psi}(x)$, the interest focuses on estimating 
the expectation value of the kinetic energy of the system of electrons and  
the expectation value of the Coulomb repulsion between the electrons.  
Here, we will be concerned with the latest. The most natural approximation 
to the expectation value of the Coulomb repulsion between the electrons is given by
\begin{equation}
D(\rho_{\psi},\rho_{\psi})=\frac{1}{2} \int \rho_{\psi} (x) \frac{1}{|x-y|} \rho_{\psi}(y) \, \dd x \, \dd y, 
\end{equation}
which is usually called the {\it direct term}. 
The remainder, i.e., the difference between the expectation value of the 
electronic repulsion and $D(\rho_{\psi},\rho_{\psi})$, say $E$, is called the {\it indirect term}. 
In 1930, Dirac \cite{Di30} gave the first approximation to the indirect 
Coulomb energy in terms of the single particle density. Using an argument with  
plane waves, he approximated $E$ by 
\begin{equation}
E \approx -c_D  \int \rho_{\psi}^{4/3} \, dx,
\label{eq:dirac}
\end{equation}
where $c_D=(3/4)(3/\pi)^{1/3} \approx  0.7386$ (see, e.g., \cite{Mo06}, p. 299). Here we use
units in which  the absolute value of the charge of the electron is one.  
The first rigorous lower bound for $E$ was obtained by E.H. Lieb in 1979 \cite{Li79}, using the Hardy--Littlewood Maximal Function \cite{StWe71}. There he found that,
$E \geq -8.52  \int \rho_{\psi}^{4/3} \, dx$.  The constant  $8.52$ was substantially improved by E.H. Lieb and S. Oxford in 1981 \cite{LiOx81}, who proved the bound
\begin{equation}
E  \ge  -C  \int \rho_{\psi}^{4/3} \, dx,
\label{eq:LO}
\end{equation}
with  $C = C_{LO}=1.68$.   In their proof, Lieb and Oxford used Onsager's electrostatic inequality \cite{On39}, and a localization argument. 
The best value for $C$ is unknown, but Lieb and Oxford \cite{LiOx81} 
proved that it is larger or equal than $1.234$. The Lieb--Oxford value was later improved to $1.636$ by 
Chan and Handy, in 1999 \cite{ChHa99}.  
Since  the work of Lieb and Oxford \cite{LiOx81},  there has been a 
special interest in quantum chemistry in constructing corrections to the Lieb--Oxford term involving the 
gradient of the single particle density. This interest arises with the expectation that states with a relatively small kinetic energy have a smaller indirect part (see, e.g., \cite{LePe93,PeBuEr96,VeMeTr09} and references therein). Recently, Benguria, Bley, and Loss obtained an alternative to (\ref{eq:LO}), which has a lower constant (close to $1.45$) to the expense of adding a gradient term (see Theorem 1.1 in  \cite{BeBlLo12}), which we state below in a slightly modified way, 

\begin{theorem}[Benguria, Bley, Loss \cite{BeBlLo12}]\label{BBL}
For any normalized wave function $\psi(x_1, \dots, x_N)$ and any $0 < \epsilon < 1/2$ we have the estimate
\begin{equation}
E(\psi) \ge -  1.4508 \, (1+\epsilon) \int_{\R^3} \rho_{\psi}^{4/3} dx -\frac{3}{2 \epsilon} (\sqrt{\rho_{\psi}}, |p| \sqrt{\rho_{\psi}})
\label{exch}
\end{equation}
where
\begin{equation}
(\sqrt \rho, |p| \sqrt \rho) :=  \int_{\R^3} |\widehat{\sqrt \rho}(k)|^2 |2\pi k| d k = \frac{1}{2\pi^2} \int_{\R^3} \int_{\R^3} \frac{|\sqrt{\rho(x)} - \sqrt{\rho(y)}|^2 }{|x-y|^4} dx dy  \ .
\label{eq:KE}
\end{equation}
Here,  $\widehat f(k)$ denotes the Fourier-transform
$$
\widehat f (k) = \int_{\R^3} e^{-2\pi i k \cdot x} f(x) d x\ .
$$
\end{theorem}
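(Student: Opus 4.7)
My plan is to follow the Onsager smeared-charge strategy of Lieb and Oxford, refined by taking the smearing radius to depend on the local density so that the residual error can be absorbed into the fractional gradient norm $(\sqrt{\rho_\psi},|p|\sqrt{\rho_\psi})$ instead of an uncontrolled localization error. For each electron position $x_i$, I would smear the unit point charge into the uniform probability measure on the ball $B(x_i,R(x_i))$, where $R(x)=\lambda\rho_\psi(x)^{-1/3}$ matches the local Wigner--Seitz scale and $\lambda>0$ is reserved for optimization at the end.

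Onsager's electrostatic inequality then bounds $\sum_{i<j}|x_i-x_j|^{-1}$ from below in terms of the self-energies $-\sum_i\tfrac{3}{10R(x_i)}$, a cross-interaction between the smeared charges and an auxiliary positive measure $\sigma$, and a subtracted Hartree energy $-D(\sigma,\sigma)$. Choosing $\sigma=\rho_\psi$ and integrating against $|\psi|^2\,\dd x_1\cdots \dd x_N$, the cross-term reproduces $2D(\rho_\psi,\rho_\psi)$ up to an error that measures how much $\rho_\psi(y)$ deviates from $\rho_\psi(x_i)$ for $y\in B(x_i,R(x_i))$. The self-energy contributes $\tfrac{3}{10\lambda}\int\rho_\psi^{4/3}\,\dd x$, and the subsequent optimization over $\lambda$ produces the constant $1.4508$.

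The main obstacle is controlling this density-fluctuation error. I expect it to take the schematic form
\[
I=\int\!\!\int\rho_\psi(x)\bigl[K_R(x,y)-|x-y|^{-1}\bigr]\rho_\psi(y)\,\dd x\,\dd y,
\]
where $K_R$ is the Coulomb potential convolved with the smearings on both sides. Using the factorization $\rho_\psi(x)-\rho_\psi(y)=[\sqrt{\rho_\psi(x)}-\sqrt{\rho_\psi(y)}][\sqrt{\rho_\psi(x)}+\sqrt{\rho_\psi(y)}]$ together with Cauchy--Schwarz, I would split $I$ into a pointwise factor comparable to $\int\rho_\psi^{4/3}$ and a fluctuation factor comparable to the representation (\ref{eq:KE}) of $(\sqrt{\rho_\psi},|p|\sqrt{\rho_\psi})$. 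The delicate point is that the effective kernel $K_R(x,y)-|x-y|^{-1}$ must decay like $|x-y|^{-4}$ at short distances in order to match the Riesz representation in (\ref{eq:KE}); the uniform ball smearing is designed precisely to produce this behavior.

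Combining these ingredients yields a master inequality of the schematic form $-E(\psi)\le a+2\sqrt{ab}$ with $a=1.4508\int\rho_\psi^{4/3}\,\dd x$ and $b=\tfrac{3}{2}(\sqrt{\rho_\psi},|p|\sqrt{\rho_\psi})$. The theorem's stated form then follows immediately from the arithmetic--geometric inequality $2\sqrt{ab}\le\epsilon a+\epsilon^{-1}b$, which turns $a+2\sqrt{ab}$ into $(1+\epsilon)a+\epsilon^{-1}b=1.4508(1+\epsilon)\int\rho_\psi^{4/3}\,\dd x+\tfrac{3}{2\epsilon}(\sqrt{\rho_\psi},|p|\sqrt{\rho_\psi})$. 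The restriction $0<\epsilon<1/2$ reflects the range in which the underlying smearing/optimization argument genuinely produces the refined constant $1.4508$ rather than degenerating into the unrefined Lieb--Oxford bound.
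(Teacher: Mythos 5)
First, a point of orientation: the manuscript does not prove Theorem \ref{BBL} at all --- it is imported verbatim from \cite{BeBlLo12}, and the present paper only characterizes that reference's method in one line (approach (ii) of the introduction: Onsager's electrostatic inequality plus localization, in the Lieb--Oxford tradition). So there is no in-paper proof to compare against, and I can only assess your sketch on its own terms. Your strategy is indeed the one used in \cite{BeBlLo12}: smear each electron over a ball of radius $R(x)=\lambda\rho_\psi(x)^{-1/3}$, apply Onsager's inequality with background $\rho_\psi$, and push the density-fluctuation error into $(\sqrt{\rho_\psi},|p|\sqrt{\rho_\psi})$. The leading constant is also consistent with this route: the self-energy term $\tfrac{3}{5\lambda}\int\rho_\psi^{4/3}$ (note: the self-energy of a uniformly charged unit ball is $3/(5R)$, not $3/(10R)$) plus the ``frozen-density'' correlation-hole term $\tfrac{2\pi\lambda^2}{5}\int\rho_\psi^{4/3}$ optimize at $\lambda^3=3/(4\pi)$ to give exactly $\tfrac{9}{10}(4\pi/3)^{1/3}\approx 1.4508$, which corroborates your choice of smearing.

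The genuine gap is the fluctuation estimate, which is the entire technical content of the theorem and which your sketch does not close. Your statement that the effective kernel ``must decay like $|x-y|^{-4}$ at short distances'' is backwards: $|x-y|^{-4}$ is the most singular object in sight, while the true kernel $|x-y|^{-1}-\Phi_x(y)$ is only $O(|x-y|^{-1})$ near the diagonal and is supported in $B(x,R(x))$; the link to the Riesz representation in (\ref{eq:KE}) comes from the domination $|x-y|^{-1}\le R^3|x-y|^{-4}$ on the ball. More seriously, after factoring $\rho(y)-\rho(x)=[\sqrt{\rho(y)}-\sqrt{\rho(x)}][\sqrt{\rho(y)}+\sqrt{\rho(x)}]$ and applying Cauchy--Schwarz/AM--GM symmetrically as you describe, the companion factor is $\int_{B(x,R)}|x-y|^{-4}(\sqrt{\rho(x)}+\sqrt{\rho(y)})^2\,\dd y$, which diverges at $y=x$ because the numerator does not vanish there: the kernel $|x-y|^{-4}$ is integrable in $\R^3$ only against the vanishing weight $|\sqrt{\rho(x)}-\sqrt{\rho(y)}|^2$, so the ``pointwise factor comparable to $\int\rho^{4/3}$'' cannot carry the same singularity. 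The powers of $|x-y|$ must be distributed asymmetrically between the two factors, and the resulting bookkeeping is what produces the precise coefficients $\epsilon\cdot 1.4508$ and $3/(2\epsilon)$; your master inequality $-E\le a+2\sqrt{ab}$ and your explanation of the restriction $0<\epsilon<1/2$ are asserted rather than derived. As it stands the proposal is a correct road map with the decisive analytic step missing.
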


\begin{remarks}

i) For many physical states the contribution of the last two terms in (\ref{exch}) is small compared with the contribution of the first term. See, e.g., the  Appendix in \cite{BeBlLo12}; 

ii) For the second equality in (\ref{eq:KE})  see, e.g., \cite{LiLo01}, Section 7.12, equation (4), p. 184;

iii) It was already noticed by Lieb and Oxford (see the remark after equation (26), p. 261 on \cite{LiOx81}), that somehow for uniform densities the Lieb--Oxford constant should be $1.45$ instead of $1.68$; 

iv) In the same vein, J. P. Perdew \cite{Pe91}, by employing results for a uniform electron gas in its low density limit, showed that in the Lieb--Oxford bound one ought to have $C \ge 1.43$ (see also \cite{LePe93}).

\end{remarks}

After the work of Lieb and Oxford \cite{LiOx81} many people have considered bounds on the indirect Coulomb energy in lower dimensions (in particular see, e.g., \cite{HaSe01} for the one-dimensional case;  \cite{LiSoYn95}, \cite{NaPoSo11}, \cite{RaPiCaPr09}, and \cite{RaSeGo11} for the two-dimensional case, which is important for the study of quantum dots). Recently, Benguria, Gallegos, and Tu\v{s}ek \cite{BeGaTu12} gave an alternative to the Lieb--Solovej--Yngvason bound \cite{LiSoYn95}, with a constant  much closer to the numerical values proposed in \cite{RaSeGo11} (see also the references therein) 
to the expense of adding a gradient term:

\begin{theorem}[Estimate on the indirect Coulomb energy for two dimensional atoms \cite{BeGaTu12}]\label{thm:LO}
Let $\psi\in L^{2}(\R^{2N})$ be normalized to one and symmetric (or antisymmetric) in all its variables. Define
$$\rho_{\psi}(x)=N\int_{\R^{2(N-1)}}|\psi|^{2}(x,x_{2},\ldots,x_{N})~\dd x_{2}\ldots\dd x_{N}.$$
If $\rho_{\psi}\in L^{3/2}(\R^2)$ and $|\nabla\rho_{\psi}^{1/4}|\in L^2(\R^2)$, then, for all $\epsilon>0$, 
\begin{equation}
 E(\psi)\equiv\langle\psi,\sum_{i<j}^{N}|x_{i}-x_{j}|^{-1}\psi\rangle-D(\rho_{\psi},\rho_{\psi})\geq -(1+\epsilon)\beta\int_{\R^{2}}\rho_{\psi}^{3/2} \, \dd x-\frac{4}{\beta\epsilon}\int_{\R^{2}}|\nabla\rho_{\psi}^{1/4}|^{2} \, \dd x
\label{eq:ind_en_est}
\end{equation}
with
\begin{equation}
\beta=\left(\frac{4}{3}\right)^{3/2}\sqrt{5\pi-1}\simeq 5.9045. 
\label{beta}
\end{equation}
\end{theorem}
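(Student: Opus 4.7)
\medskip
\noindent\textbf{Proof plan.} The strategy is a two-dimensional adaptation of the Lieb--Oxford argument, enhanced with a gradient correction in the spirit of Theorem \ref{BBL}. The basic idea is to replace each of the $N$ point charges by a smeared disk of a suitably chosen radius, apply Onsager's electrostatic inequality to bound the Coulomb repulsion from below, and then optimize the smearing radius as a function of the local density, absorbing the leftover error into the gradient functional $\int|\nabla\rho_\psi^{1/4}|^2 \, dx$.

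The first step is to apply Onsager's electrostatic inequality. For a configuration $(x_1, \dots, x_N) \in \R^{2N}$, replace each point charge at $x_i$ by a positive measure $\mu^{x_i}$ of total mass one, supported on the disk $B_{R(x_i)}(x_i)$, where $R: \R^2 \to (0,\infty)$ is a radius function to be chosen below. The natural choice is the normalized uniform measure on the disk, whose 2D Coulomb self-energy is $I(\mu^{x_i}) = c_0 / R(x_i)$, with $c_0$ an explicit constant obtained from a double integral over the unit disk. Onsager's inequality then produces, after integration against $|\psi|^2$ and rearrangement, a lower bound on $E(\psi) = \langle\psi,\sum_{i<j}|x_i-x_j|^{-1}\psi\rangle - D(\rho_\psi,\rho_\psi)$ of the schematic form $-\sum_i I(\mu^{x_i}) - (\text{smearing defect})$, where the smearing defect measures the discrepancy between the direct energy of the smeared density $\rho_\psi * \mu_R$ and the physical direct energy $D(\rho_\psi,\rho_\psi)$.

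Next, I would choose $R(x) = \lambda\,\rho_\psi(x)^{-1/2}$ for a positive parameter $\lambda$; this is forced by dimensional analysis, as it is the scaling under which each smearing disk carries approximately one electron. Under this choice, the self-energy sum integrates to $(c_0/\lambda)\int\rho_\psi^{3/2}\,dx$, which will furnish the $\int\rho_\psi^{3/2}$ contribution. The smearing defect, rewritten as a Coulomb self-energy of $\rho_\psi - \rho_\psi * \mu_R$, would be bounded by a Sobolev-type inequality. The identity $|\nabla\rho_\psi^{1/2}|^2 = 4\rho_\psi^{1/2}|\nabla\rho_\psi^{1/4}|^2$, combined with a Cauchy--Schwarz splitting, produces $\rho_\psi^{3/2}$ on one side and $|\nabla\rho_\psi^{1/4}|^2$ on the other, which explains the appearance of the exponent $1/4$. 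Introducing the parameter $\epsilon > 0$ at this Cauchy--Schwarz step distributes the error into a portion absorbed by the $\rho_\psi^{3/2}$ term, giving the factor $(1+\epsilon)$, and a portion carried by the gradient correction with weight $1/\epsilon$.

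Finally, optimizing over $\lambda > 0$ balances the self-energy coefficient $c_0/\lambda$ against the smearing-error coefficient, which is linear in $\lambda$; this AM--GM minimum is the source of the square root in $\beta = (4/3)^{3/2}\sqrt{5\pi-1}$, while $c_0$ and the Cauchy--Schwarz constants supply the remaining explicit factors. The main obstacle I anticipate is the sharp functional inequality of the third step: one must bound the Coulomb self-energy of $\rho_\psi - \rho_\psi*\mu_R$ by an explicit multiple of $\lambda\int|\nabla\rho_\psi^{1/4}|^2\,dx$ with a constant that, when tracked through the optimization, yields precisely $\sqrt{5\pi-1}$. This step is nontrivial because $\rho_\psi$ enters the bound in a strongly nonlinear way (via its fourth root), so extracting the correct constant requires careful use of the chain rule together with the Coulomb--Riesz mapping properties in dimension two.
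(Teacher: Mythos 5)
Your proposal founders at its very first step, and the obstruction is one the paper itself flags explicitly: Onsager's electrostatic inequality is not available in two dimensions for the kernel $1/|x|$. The Lieb--Oxford smearing argument rests on Newton's theorem, i.e., on the fact that in $\R^3$ the potential $1/|x|$ is superharmonic (it is the fundamental solution of the Laplacian), so that replacing a point charge by a probability measure on a ball does not increase the mutual repulsion: $|x_i-x_j|^{-1}\geq\iint |y-z|^{-1}\,\dd\mu_i(y)\,\dd\mu_j(z)$. In $\R^2$ one computes $\Delta(1/r)=r^{-3}>0$ away from the origin, so $1/|x|$ is \emph{subharmonic} there; the spherical (circular) average of $|x-y|^{-1}$ exceeds $|x-y_0|^{-1}$, smearing \emph{increases} the interaction energy, and the key inequality runs in the wrong direction. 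No choice of $R(x)$, $\lambda$, or Cauchy--Schwarz splitting downstream can repair this, because the sign of the leading electrostatic step is lost. (The positive-definiteness of $1/|x|$ in $\R^2$, which gives $D(\nu,\nu)\geq 0$, survives, but that alone does not yield the Onsager lower bound.)

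The paper's actual proof of Theorem \ref{thm:LO} (and of its generalization, Theorem \ref{thm:BeTu}) takes a genuinely different route, namely the third method listed in the introduction: one establishes a stability-of-matter result, $\xi(\rho)\geq 0$, for an auxiliary ultrarelativistic Thomas--Fermi--Weizs\"acker--type molecular functional (Theorem \ref{theo:stability}), whose proof uses a Coulomb uncertainty principle on disks (Lemma \ref{lem:uncert_princ}) rather than electrostatics of smeared charges. One then invokes the Lieb--Thirring device: place the $K=N$ ``nuclei'' of charge $z=1$ at the electron coordinates $x_i$, multiply the pointwise inequality $\xi(\rho_\psi)\geq 0$ by $|\psi|^2$, and integrate over $\R^{2N}$; the nuclear repulsion term $U$ becomes $\langle\psi,\sum_{i<j}|x_i-x_j|^{-1}\psi\rangle$ and the attraction term becomes $-2D(\rho_\psi,\rho_\psi)$, which rearranges into (\ref{eq:ind_en_est}). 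The constant $\beta=(4/3)^{3/2}\sqrt{5\pi-1}$ arises from optimizing the stability condition on the maximal nuclear charge (the function $h(\sigma)$ in (\ref{EquationForH})), not from an AM--GM balance of a self-energy against a smearing defect. If you want to pursue a smearing argument in two dimensions you would have to abandon $1/|x|$-superharmonicity entirely and work, as Lieb--Solovej--Yngvason do, with the Hardy--Littlewood maximal function, at the price of a much worse constant.
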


\begin{remarks}

\bigskip
\noindent
i) The  constant $\beta \simeq 5.9045$ in (\ref{eq:ind_en_est}) is substantially lower than the  constant $C_{LSY} \simeq 481.27$ found in \cite{LiSoYn95} (see equation (5.24) of lemma 5.3 in  \cite{LiSoYn95}).

\bigskip
\noindent
ii) Moreover, the constant $\beta$  is close to the numerical values (i.e., $\simeq 1.95$)  of \cite{RaPiCaPr09} (and references therein), but is not sharp.

\end{remarks} 

In the literature there are, so far,  three approaches to prove lower bounds on the exchange energy, namely: 

\bigskip
\noindent
i)  The approach introduced  by E.H. Lieb in 1979 \cite{Li79}, which uses as the main tool the Hardy--Littlewood Maximal Function \cite{StWe71}. This method was used in the first bound  of Lieb \cite{Li79}. Later it was used in \cite{LiSoYn95} to obtain a lower bound on the exchange energy of two--dimensional Coulomb systems. It has the advantage that it may be applied in a wide class of problems, but it does not yield sharp constants. 

\bigskip
\noindent
 ii) The use of Onsager's electrostatic inequality \cite{On39} together with localization techniques, introduced by Lieb and Oxford \cite{LiOx81}. This method yields very sharp constants. It was used recently in \cite{BeBlLo12} to get a new type of bounds including gradient terms (for three dimensional Coulomb systems). 
 In some sense the constant $1.4508$ recently obtained in \cite{BeBlLo12} is the best possible (see the comments after Theorem \ref{BBL}). The only disadvantage of this approach  is that it depends  on the use of Onsager's electrostatic inequality (which in turn relies  on the fact that the Coulomb potential is the fundamental solution of the Laplacian). Because of this, it cannot be used in the case of two--dimensional atoms, because $1/|x|$ is not the fundamental solution of the two--dimensional Laplacian.  
 
 \bigskip
 \noindent
 iii) The use of the stability of matter of an auxiliary many particle system. This idea was used by Lieb and Thirring \cite{LiTh75}  to obtain lower bounds on the kinetic energy of a systems of electrons in terms of the single particle density. In connection with the problem of getting lower bounds on the exchange energy it was used for the first time in \cite{BeGaTu12}, to get a lower bound on the exchange energy of two--dimensional Coulomb systems including gradient terms. This method provides very good, although not sharp, constants.

\bigskip
\bigskip

As we mentioned above, during the last twenty years there has been a 
special interest in quantum chemistry in constructing corrections to the Lieb--Oxford term involving the 
gradients  of the single particle density. This interest arises with the expectation that states with a relatively small kinetic energy have a smaller indirect part (see, e.g., \cite{LePe93,PeBuEr96,VeMeTr09} and references therein). While the form of  leading term (i.e., the dependence as an integral of  $\rho^{4/3}$ in three dimensions or as an  integral of $\rho^{3/2}$ in two dimensions)  is dictated by Dirac's argument (using plane waves), there is no such a clear argument, nor a common agreement concerning the structure of the gradient corrections. The reason we introduced the particular gradient term,  $\int_{\R^{2}}|\nabla\rho_{\psi}^{1/4}|^{2} \, \dd x$ in our earlier work \cite{BeGaTu12}, was basically due to the fact that we already knew the stability of matter arguments for the auxiliary system. However, there is a whole one parameter family of such gradient terms that can be dealt in the same manner. In this manuscript we obtain lower bounds including as gradient terms this one--parameter family. One interesting feature of our bounds is that the constant $\beta$ in front of the leading term remains the same (i.e., its  value is independent of the parameter that labels the different possible gradient terms), while the constant in front  of the gradient term is parameter dependent. 

\bigskip
Our main result is the following theorem.

\begin{theorem}[Estimate on the indirect Coulomb energy for two dimensional atoms]\label{thm:BeTu}
Let $1<\gamma<3$, and 
$\alpha = (3-\gamma)/(2 \gamma)$. 
Assume, $\rho_{\psi}\in L^{3/2}(\R^2)$ 
and $|\nabla \rho_{\psi}^{\alpha}| \in L^{\gamma}(\R^2)$. 
Let $C(p) = 2^{1-p/2}$, for $0<p \le 2$ while $C(p)=1$, for $p \ge 2$. Then, for all  $\epsilon >0$ we have, 
\begin{equation}
 E(\psi)\equiv\langle\psi,\sum_{i<j}^{N}|x_{i}-x_{j}|^{-1}\psi\rangle-D(\rho_{\psi},\rho_{\psi})
 \geq 
 -\tilde{b}^2 \int_{\R^{2}}\rho_{\psi}^{3/2} \, \dd x  - \tilde{a}^2 \int_{\R^{2}}|\nabla\rho_{\psi}^{\alpha}|^{\gamma} \, \dd x.
\label{MainLowerBound}
\end{equation}
Here, 
\begin{equation}
\tilde{b}^{2}=\left(\frac{4}{3}\right)^{3/2}{\sqrt{5\pi-1}} \, (1+\epsilon) = \beta \, (1+\epsilon)
\label{constant1}
\end{equation}
where $\beta$ is the same constant that appears in (\ref{beta}). 
Also, 
\begin{equation}
 \tilde{a}^{2}=\frac{2^\gamma C(\gamma)}{3-\gamma}\left(\frac{1}{\beta \, \epsilon} \,\frac{\gamma-1}{3-\gamma}\, C\bigg(\frac{\gamma}{\gamma-1}\bigg)\right)^{\gamma-1}.
 \label{constant2}
\end{equation}
In particular, we have (with a fixed $\epsilon$)
$$
\tilde{a}^{2}|_{\gamma\to 1+}=\sqrt{2}.
$$
\end{theorem}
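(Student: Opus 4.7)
My plan is to follow the stability-of-matter scheme used in \cite{BeGaTu12} to prove Theorem \ref{thm:LO}, and to depart from it only at the final algebraic step, where Young's inequality with conjugate exponents $\gamma$ and $\gamma/(\gamma-1)$ replaces the Cauchy--Schwarz/AM--GM tailored to $\gamma=2$.

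The strategy is as follows. The proof of Theorem \ref{thm:LO} passes through an intermediate estimate of the schematic form
\begin{equation*}
E(\psi)\,\geq\,-\beta\int_{\R^{2}}\rho_{\psi}^{3/2}\,\dd x\,-\,\int_{\R^{2}}|\nabla\rho_{\psi}|\,\dd x,
\end{equation*}
which falls out of the 2D smeared-charge stability argument. Using the chain-rule identity $|\nabla\rho_{\psi}|=\alpha^{-1}\rho_{\psi}^{1-\alpha}|\nabla\rho_{\psi}^{\alpha}|$ with the specific exponent $\alpha=(3-\gamma)/(2\gamma)$ (the unique choice that makes the first H\"older factor equal $\rho_{\psi}^{3/2}$), and applying H\"older's inequality with conjugate exponents $\gamma/(\gamma-1)$ and $\gamma$, I obtain
\begin{equation*}
\int_{\R^{2}}|\nabla\rho_{\psi}|\,\dd x\,\leq\,\alpha^{-1}\,X^{(\gamma-1)/\gamma}\,Z^{1/\gamma},\qquad X:=\int_{\R^{2}}\rho_{\psi}^{3/2}\,\dd x,\quad Z:=\int_{\R^{2}}|\nabla\rho_{\psi}^{\alpha}|^{\gamma}\,\dd x.
\end{equation*}
Young's inequality with free parameter $\eta>0$ then splits this as $X^{(\gamma-1)/\gamma}Z^{1/\gamma}\leq \frac{\gamma-1}{\gamma\,\eta^{1/(\gamma-1)}}X+\frac{\eta}{\gamma}Z$, and tuning $\eta$ so that the $X$-coefficient contributes exactly $\epsilon\beta$ leaves the leading constant $\tilde b^{2}=\beta(1+\epsilon)$ independent of $\gamma$ (since $\beta$ enters only through the intermediate estimate), and produces $\tilde a^{2}$ in the form (\ref{constant2}).

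The factors $C(\gamma)$ and $C(\gamma/(\gamma-1))^{\gamma-1}$ in (\ref{constant2}) appear at two points of the argument, each time because one must compare the Euclidean gradient $|\nabla f|=((\partial_{1}f)^{2}+(\partial_{2}f)^{2})^{1/2}$ with its $\ell^{p}$-counterpart $(|\partial_{1}f|^{p}+|\partial_{2}f|^{p})^{1/p}$: once inside the H\"older step (with $p=\gamma$) and once inside the Young step (with $p=\gamma/(\gamma-1)$). These comparisons are controlled by $(|v_{1}|^{2}+|v_{2}|^{2})^{p/2}\leq C(p)(|v_{1}|^{p}+|v_{2}|^{p})$ for $p\leq 2$, and its reverse with $C(p)=1$ for $p\geq 2$, which is exactly the definition used in the statement.

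The main obstacle will be the careful bookkeeping of numerical constants through the H\"older--Young chain so that $\tilde a^{2}$ matches (\ref{constant2}) exactly, including the correct placement of the two $C(p)$ factors. The calculation is guided by two consistency checks: (i) the $\gamma=2$ case must reproduce $\tilde a^{2}=4/(\beta\epsilon)$, and hence Theorem \ref{thm:LO} (since $C(2)=1$); (ii) the limit $\gamma\to 1^{+}$ must collapse to $\tilde a^{2}=\sqrt{2}$ as stated, because $C(1)=\sqrt{2}$ while both $C(\gamma/(\gamma-1))^{\gamma-1}$ and $((\gamma-1)/(3-\gamma))^{\gamma-1}$ tend to $1$. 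The range $1<\gamma<3$ is exactly the range in which the Young exponents and the chain-rule exponent $\alpha$ are all strictly positive and finite.
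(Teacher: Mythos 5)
Your proposal has a genuine gap at its foundation: the entire argument rests on the asserted intermediate estimate $E(\psi)\geq-\beta\int_{\R^{2}}\rho_{\psi}^{3/2}\,\dd x-\int_{\R^{2}}|\nabla\rho_{\psi}|\,\dd x$, which you say ``falls out of the 2D smeared-charge stability argument'' but which is proved nowhere and is not a consequence of the machinery of \cite{BeGaTu12} (which, note, cannot use Onsager-type smeared-charge electrostatics in two dimensions at all). The actual proof does not post-process a fixed $\gamma$-independent inequality; it builds $\gamma$ into the auxiliary functional $\xi(\rho)=\tilde a^{2}\int|\nabla\rho^{\alpha}|^{\gamma}+\tilde b^{2}\int\rho^{3/2}-\int V\rho+D(\rho,\rho)+U$ from the outset, controls the Coulomb singularity on each disk $B_{j}$ by a $\gamma$-dependent Coulomb uncertainty principle (Lemma \ref{lem:uncert_princ} with $u(r)=1/r-1/R$, $f=\rho^{\alpha}$, followed by Young's inequality with exponents $\gamma$ and $\delta=\gamma/(\gamma-1)$), and only then runs the Thomas--Fermi-type stability analysis that fixes $\tilde a$ and $\tilde b$ through the condition $1\leq\max_{\sigma}h(\sigma)$. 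Your intermediate estimate is the formal $\gamma\to1^{+}$ endpoint of the theorem, which is excluded by the hypothesis $1<\gamma<3$ precisely because $\delta$ degenerates to $\infty$ there; and even as a limit the constants are $\beta(1+\epsilon)$ and $\sqrt{2}$, not $\beta$ and $1$, so you would be deducing the theorem from a statement strictly stronger than anything the method yields. Without an independent proof of that inequality the argument does not start.

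A second, related problem is your accounting of the factors $C(\gamma)$ and $C(\gamma/(\gamma-1))^{\gamma-1}$. In the global chain $\int|\nabla\rho|=\alpha^{-1}\int\rho^{1-\alpha}|\nabla\rho^{\alpha}|\leq\alpha^{-1}X^{(\gamma-1)/\gamma}Z^{1/\gamma}$, with Euclidean gradient norms throughout, no comparison between $\ell^{2}$ and $\ell^{p}$ norms of the gradient vector ever occurs, so the $C(p)$ factors would simply not appear and your formula for $\tilde a^{2}$ could not match (\ref{constant2}) for $\gamma<2$. In the paper these factors arise inside the local uncertainty principle, where a component-wise H\"older step produces $\bigl(\int\sum_{j}|\partial_{j}f|^{\gamma}\bigr)^{1/\gamma}\bigl(\int\sum_{j}|x_{j}|^{\delta}|u|^{\delta}|f|^{(1/\alpha-1)\delta}\bigr)^{1/\delta}$ and Lemma \ref{lem:norm_comp} converts the sums to Euclidean norms at the cost of $C(\gamma)$ and $C(\delta)$. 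Your two consistency checks are correct, and the observation that $\alpha=(3-\gamma)/(2\gamma)$ is forced by matching the H\"older conjugate factor to $\int\rho^{3/2}$ is sound (it is the same exponent the paper selects so that $K(\rho_{\lambda})=\lambda K(\rho)$), but neither repairs the missing intermediate estimate or the misplaced constants.
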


\begin{remarks}

\bigskip
\noindent
i) Our previous Theorem \ref{thm:LO} is a particular case of Theorem \ref{thm:BeTu}, for the value $\gamma=2$, $\alpha=1/4$. 

\bigskip
\noindent
ii) Notice that $\tilde{b}^2$ is independent of $\gamma$, and it is therefore the same as in \cite{BeGaTu12}. 

\bigskip
\noindent
iii)The constant in front of the gradient term depends on the power $\gamma$ and, of course, on $\epsilon$. However, as $\gamma \to 1+$, this constant converges to $\sqrt{2}$ independently of the value of $\epsilon$. 

\end{remarks}

In the rest of the manuscript we give a sketch of the proof of this theorem, which follows closely  the proof of the particular result \ref{thm:LO} in \cite{BeGaTu12}.

\section{Auxiliary lemmas}

First we need a standard convexity result. 

\begin{lemma}\label{lem:norm_comp}
 Let $x,y\in\R$, and $p>0$. Then
 $$|x|^p+|y|^p\leq C(p)|x+iy|^p,$$
where $C(p)=2^{1-p/2}$ for $0<p\leq 2$, and $C(p)=1$ for $p\geq 2$. The constant $C(p)$ is sharp.
\end{lemma}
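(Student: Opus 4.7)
The plan is to reduce the two-variable inequality to a single-variable optimization by exploiting the obvious symmetries. First, observe that both sides depend only on $|x|$ and $|y|$ (since $|x+iy|^p=(x^2+y^2)^{p/2}$) and are positively homogeneous of degree $p$. Hence I may assume $x,y\geq 0$ and $x^2+y^2=1$, and write $x=\cos\theta$, $y=\sin\theta$ with $\theta\in[0,\pi/2]$. The claim then becomes
$$
f(\theta):=\cos^{p}\theta+\sin^{p}\theta \leq C(p)\qquad\text{for all }\theta\in[0,\pi/2].
$$

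Next I would locate the maximum of $f$ on $[0,\pi/2]$. A short computation gives
$$
f'(\theta)=p\sin\theta\cos\theta\bigl(\sin^{p-2}\theta-\cos^{p-2}\theta\bigr),
$$
so the only interior critical point is $\theta=\pi/4$, where $f(\pi/4)=2\cdot 2^{-p/2}=2^{1-p/2}$, while at the endpoints $f(0)=f(\pi/2)=1$.

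Comparing the two candidate values splits the problem into the two stated regimes. For $0<p\leq 2$ one has $2^{1-p/2}\geq 1$, so the maximum is $2^{1-p/2}$ attained at $\theta=\pi/4$; this yields the constant $C(p)=2^{1-p/2}$ with equality precisely when $|x|=|y|$. For $p\geq 2$ one has $2^{1-p/2}\leq 1$, so the maximum is $1$ attained at the endpoints; this yields $C(p)=1$ with equality exactly when $x=0$ or $y=0$. (Alternatively, for $p\geq 2$ one may simply invoke the monotonicity of $\ell^{p}$ norms on $\R^{2}$, namely $\|v\|_{p}\leq\|v\|_{2}$.) Sharpness of both constants is immediate from the equality cases identified above.

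The argument is entirely elementary and I do not anticipate a genuine obstacle; the only mild care needed is to check that the critical point $\theta=\pi/4$ is indeed a maximum (not a minimum) when $p<2$ and a minimum when $p>2$, which follows directly from the sign of $\sin^{p-2}\theta-\cos^{p-2}\theta$ on either side of $\pi/4$ together with the endpoint comparison. Once this case distinction is performed, the statement of the lemma drops out immediately.
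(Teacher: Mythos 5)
Your argument is correct, and it takes a genuinely different route from the paper. The paper disposes of the two regimes by quoting two standard facts: for $p\geq 2$ it invokes the monotonicity of the $\ell^p$-norms on $\R^2$ (i.e. $\|v\|_p\leq\|v\|_2$), and for $0<p<2$ it applies concavity of $t\mapsto t^{s}$ with $s=p/2\in(0,1)$ to $t=x^2$ and $t=y^2$ (Jensen's inequality gives $u^{s}+v^{s}\leq 2^{1-s}(u+v)^{s}$ directly). You instead homogenize, restrict to the unit circle, and maximize $f(\theta)=\cos^p\theta+\sin^p\theta$ by elementary calculus; your computation of $f'$, the identification of $\theta=\pi/4$ as the only interior critical point, and the endpoint comparison are all correct, and the case split at $p=2$ falls out of comparing $2^{1-p/2}$ with $1$. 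What your approach buys is a uniform treatment of both regimes in one computation and, more importantly, an explicit identification of the equality cases ($|x|=|y|$ for $p\leq 2$, $xy=0$ for $p\geq 2$), which is exactly what is needed to justify the sharpness claim in the lemma --- a point the paper's one-line proof does not address at all. What the paper's route buys is brevity and the fact that the two quoted inequalities generalize immediately to $\R^n$. The only point worth double-checking in your write-up, which you already flag, is that for $0<p<1$ the derivative blows up at the endpoints; this is harmless since $f$ is still continuous on the closed interval and differentiable in the interior, so the maximum is attained either at an interior critical point or at an endpoint, as you use.
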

\begin{proof}
 If $p\geq 2$, the assertion follows, e.g.,  from the fact that $l^{p}$-norm is decreasing in $p$. On the other hand, for $0<p<2$, the assertion follows from the concavity of the mapping $t \to t^{1/n}$ for $t>0$ and $n>1$. 
 \end{proof}

The next lemma is a generalization of the analogous result introduced in \cite{BeLoSi07} and used in  the proof of Theorem  \ref{thm:LO} above (see \cite{BeGaTu12}). This lemma is later needed to prove a Coulomb Uncertainty Principle. 

\begin{lemma} \label{lem:uncert_princ}
 Let $D_{R}$ stands for the disk of radius $R$ and origin $(0,0)$. Moreover let  $u=u(|x|)$ be a smooth function such that $u(R)=0$ and $1<\gamma<3$. Then the following uncertainty principle holds
\begin{equation*}
\begin{split}
 &\left|\int_{D_{R}}\big[ 2u(|x|)+|x|u'(|x|)\big]f(x)^{1/\alpha}\right|\leq\\
 &\leq\frac{1}{\alpha}\left(C(\gamma)\int_{D_{R}}|\nabla f(x)|^{\gamma}\, \dd x\right)^{1/\gamma}\left(C(\delta)\int_{D_{R}}|x|^{\delta}|u(|x|)|^{\delta}|f(x)|^{3/(2\alpha)}\, \dd x\right)^{1/\delta},
\end{split}
\end{equation*}
where 
\begin{equation} \label{eq:coeff}
\frac{1}{\alpha}=\frac{2\gamma}{3-\gamma},\qquad \frac{1}{\gamma}+\frac{1}{\delta}=1.
\end{equation}
\end{lemma}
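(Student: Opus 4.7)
My plan is to exploit the two-dimensional divergence identity
\begin{equation*}
\nabla \cdot \bigl(x\, u(|x|)\bigr) = 2\, u(|x|) + |x|\, u'(|x|),
\end{equation*}
which recasts the weight on the left-hand side as a pure divergence. Integrating by parts over $D_R$ kills the boundary contribution because $u(R)=0$, and using the chain rule $\nabla(f^{1/\alpha}) = (1/\alpha)\, f^{1/\alpha-1}\,\nabla f$ (for non-negative $f$) one lands on
\begin{equation*}
\int_{D_R} \bigl[2u(|x|)+|x|u'(|x|)\bigr]\, f^{1/\alpha}\, \dd x \;=\; -\frac{1}{\alpha}\int_{D_R} u(|x|)\, f^{1/\alpha-1}\,\bigl(x\cdot\nabla f\bigr)\, \dd x.
\end{equation*}

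To surface the two constants $C(\gamma)$ and $C(\delta)$ I would then expand the scalar product in Cartesian coordinates, estimate pointwise $|x\cdot\nabla f|\le |x_1||\partial_1 f|+|x_2||\partial_2 f|$, and apply H\"older's inequality with conjugate exponents $\gamma$ and $\delta=\gamma/(\gamma-1)$ to each of the two resulting summands. A subsequent discrete H\"older in the index $i\in\{1,2\}$ merges the two products into a single one whose factors are integrals of $|\partial_1 f|^{\gamma}+|\partial_2 f|^{\gamma}$ and of $|u|^\delta (|x_1|^\delta+|x_2|^\delta)|f|^{(1/\alpha-1)\delta}$. At this point Lemma \ref{lem:norm_comp}, applied pointwise first to the pair $(\partial_1 f,\partial_2 f)$ and then to the pair $(x_1,x_2)$, converts those Cartesian sums into $C(\gamma)|\nabla f|^{\gamma}$ and $C(\delta)|x|^{\delta}$, respectively, which is precisely the way the two constants enter the statement.

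The only remaining check is that the exponent of $|f|$ on the right comes out correctly. With $1/\alpha = 2\gamma/(3-\gamma)$ and $\delta=\gamma/(\gamma-1)$ a short computation gives
\begin{equation*}
\Bigl(\frac{1}{\alpha}-1\Bigr)\delta \;=\; \frac{3(\gamma-1)}{3-\gamma}\cdot\frac{\gamma}{\gamma-1} \;=\; \frac{3\gamma}{3-\gamma} \;=\; \frac{3}{2\alpha},
\end{equation*}
as advertised. The obstacles are quite mild and essentially technical: one needs $f\ge 0$ for $f^{1/\alpha}$ to be unambiguous (the right-hand side already treats $|f|$), and one must ensure that $f^{1/\alpha-1}\nabla f$ is integrable on $D_R$ so that the integration by parts is legitimate, which should follow from the implicit smoothness hypothesis plus a standard approximation argument. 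All the conceptual content sits in the two-dimensional divergence identity together with the component-wise use of Lemma \ref{lem:norm_comp} that manufactures $C(\gamma)$ and $C(\delta)$ simultaneously.
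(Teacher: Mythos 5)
Your proposal is correct and follows essentially the same route as the paper: the divergence identity $\nabla\cdot(x\,u(|x|))=2u+|x|u'$, integration by parts using $u(R)=0$, H\"older's inequality with exponents $\gamma$ and $\delta$ over the combined sum--integral, and Lemma \ref{lem:norm_comp} applied to the pairs $(\partial_1 f,\partial_2 f)$ and $(x_1,x_2)$ to produce $C(\gamma)$ and $C(\delta)$. The exponent check $(1/\alpha-1)\delta=3/(2\alpha)$ is also exactly what makes the right-hand side match the statement.
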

\begin{proof}
 Set $g_{j}(x)=u(|x|)x_{j}$. Then we have,
\begin{equation*}
 \begin{split}
  &\int_{D_{R}}[2u(|x|)+|x|u'(|x|)]f(x)^{1/\alpha}\, \dd x=\sum_{j=1}^{2}\int_{D_{R}}[\partial_{j}g_{j}(x)]f(x)^{1/\alpha}\, \dd x= \\
  &=\sum_{j}\int_{D_{R}}f(x)\partial_{j}[g_{j}(x)f(x)^{1/\alpha-1}]\, \dd x-\left(\frac{1}{\alpha}-1\right)\sum_{j}\int_{D_{R}}f(x)^{1/\alpha-1}g_{j}(x)\partial_{j}f(x)\, \dd x=\\
  &=-\frac{1}{\alpha}\int_{D_{R}}\langle \nabla f(x),\, x\rangle u(|x|)f(x)^{1/\alpha-1}\, \dd x.
 \end{split}
\end{equation*}
In the last equality we integrated by parts and made use of the fact that $u$ vanishes on the boundary $\partial D_{R}$. Next, the H\"{o}lder inequality implies 
\begin{equation*}
\begin{split}
 &\left|\int_{D_{R}}\big[ 2u(|x|)+|x|u'(|x|)\big]f(x)^{1/\alpha}\right|\leq \\
 &\frac{1}{\alpha}\left(\int_{D_{R}}\sum_{j=1}^{2}|\partial_{j}f(x)|^{\gamma}\, \dd x\right)^{1/\gamma}\left(\int_{D_{R}}\sum_{j=1}^{2}|x_{j}|^{\delta} |u(|x|)|^{\delta}|f(x)|^{(1/\alpha-1)\delta}\, \dd x\right)^{1/\delta}.
\end{split}
\end{equation*}
The rest follows from Lemma $\ref{lem:norm_comp}$.
\end{proof}

\section{A stability result for an auxiliary two-dimensional molecular system}

Here we follow the method introduced in \cite{BeGaTu12}. That is,  in order to prove our  Lieb--Oxford type bound (with gradient corrections)  in two dimensions we use  a stability of matter type  result on an auxiliary molecular system. This molecular system is an extension of the one studied in \cite{BeGaTu12}, which was adapted from the similar result in three dimensions discussed  in \cite{BeLoSi07} (this last one  corresponds to the zero mass limit of the model introduced in \cite{En87,EnDr87,EnDr88}). We begin with a typical Coulomb Uncertainty Principle which uses the kinetic energy of the electrons in a ball to bound the Coulomb singularities.  

\begin{theorem}
For every smooth non-negative function $\rho$ on the closed disk $D_{R} \subset \R^2$, and for any  $a,b>0$ we have
\begin{equation*}
 ab \, \alpha\left|\int_{D_{R}}\left(\frac{1}{|x|}-\frac{2}{R}\right)\rho(x)\, \dd x\right|\leq \frac{a^\gamma C(\gamma)}{\gamma}\,\int_{D_{R}}|\nabla\rho(x)^{\alpha}|^{\gamma}\, \dd x+\frac{b^\delta C(\delta)}{\delta}\,\int_{D_{R}}\rho^{3/2}\,\dd x,
\end{equation*}
where $1<\gamma< 3$, and $\alpha$ and $\delta$ are as in (\ref{eq:coeff}).
\end{theorem}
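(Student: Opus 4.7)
The plan is to apply the uncertainty principle in Lemma \ref{lem:uncert_princ} with a specific choice of the auxiliary radial function $u$ that produces the Coulomb factor $1/|x|-2/R$ on the left-hand side, then use Young's inequality with parameters $a,b$ to split the resulting product into the two terms on the right-hand side.

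First I would choose $u(r)=\frac{1}{R}-\frac{1}{r}$. Clearly $u(R)=0$ as required. A direct computation gives
$$2u(r)+r\,u'(r)=\frac{2}{R}-\frac{2}{r}+\frac{1}{r}=\frac{2}{R}-\frac{1}{r},$$
so substituting this $u$ and $f=\rho^{\alpha}$ into Lemma \ref{lem:uncert_princ} turns $f^{1/\alpha}$ into $\rho$ and makes the quantity inside the absolute value on the left-hand side exactly $-\int_{D_{R}}\!\bigl(\tfrac{1}{|x|}-\tfrac{2}{R}\bigr)\rho\,\dd x$. Moreover $|f|^{3/(2\alpha)}=\rho^{3/2}$, because $(1/\alpha-1)\delta=3/(2\alpha)$ follows from $1/\alpha=2\gamma/(3-\gamma)$ and $\delta=\gamma/(\gamma-1)$.

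Second, I would eliminate the weight $|x|^{\delta}|u(|x|)|^{\delta}$ from the second factor on the right-hand side of Lemma \ref{lem:uncert_princ}. For $0<r\le R$, $u(r)\le 0$ and
$$r\,|u(r)|=r\left(\frac{1}{r}-\frac{1}{R}\right)=1-\frac{r}{R}\in[0,1],$$
hence $|x|^{\delta}|u(|x|)|^{\delta}\le 1$ pointwise on $D_{R}$. Lemma \ref{lem:uncert_princ} therefore reduces to
$$\alpha\left|\int_{D_{R}}\!\left(\frac{1}{|x|}-\frac{2}{R}\right)\rho\,\dd x\right|\le\left(C(\gamma)\int_{D_{R}}|\nabla\rho^{\alpha}|^{\gamma}\dd x\right)^{1/\gamma}\left(C(\delta)\int_{D_{R}}\rho^{3/2}\dd x\right)^{1/\delta}.$$

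Finally, writing $A$ and $B$ for the two factors on the right and multiplying both sides by $ab$, I apply Young's inequality $(aA)(bB)\le (aA)^{\gamma}/\gamma+(bB)^{\delta}/\delta$, valid since $1/\gamma+1/\delta=1$. Raising $A$ to the $\gamma$-th power restores $C(\gamma)\int|\nabla\rho^{\alpha}|^{\gamma}$ and similarly $B^{\delta}$ restores $C(\delta)\int\rho^{3/2}$, yielding exactly the stated bound. There is essentially no obstacle here beyond bookkeeping: once the correct $u$ is chosen so that $2u+r u'$ produces the Coulomb kernel and $u$ vanishes at $R$ (so the boundary term in Lemma \ref{lem:uncert_princ} vanishes), the pointwise estimate $r|u(r)|\le 1$ and Young's inequality finish the argument mechanically.
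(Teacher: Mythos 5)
Your proposal is correct and follows essentially the same route as the paper, which simply sets $u(r)=1/r-1/R$ and $f=\rho^{\alpha}$ in Lemma \ref{lem:uncert_princ} and invokes Young's inequality with exponents $\gamma$ and $\delta$; your sign flip in $u$ is immaterial because of the absolute value. You have merely made explicit the intermediate steps (the pointwise bound $|x|\,|u(|x|)|\le 1$ and the Young step) that the paper leaves to the reader.
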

\begin{proof}
 In Lemma \ref{lem:uncert_princ} we set $u(r)=1/r-1/R$ and $f=\rho^{\alpha}$. The assertion of the theorem then follows from Young inequality with coefficients $\gamma$ and $\delta$.
\end{proof}

And now we introduce the auxiliary molecular system through the ``energy functional''
\begin{equation} 
\xi(\rho)= \tilde{a}^2 \int_{\R^2} |\nabla \rho^{\alpha}|^\gamma \, \dd x + \tilde{b}^2 \int_{\R^2} \rho^{3/2} \, \dd x - \int_{\R^2} V(x) \rho (x) \, \dd x +D(\rho,\rho) +U,
\label{EnergyFunctional}
\end{equation}
where
$$V(x) = \sum_{i=1}^K \frac{z}{|x-R_i|},\quad D(\rho,\rho) = \frac{1}{2} \int_{\R^2 \times \R^2} \rho(x) \frac{1}{|x-y|} \rho(y) \, \dd x\, \dd y,\quad U = \sum_{1 \le i < j \le K} \frac{z^2}{|R_i-R_j|}$$
with $z>0$ and $R_{i}\in\R^{2}$. As above we assume $1<\gamma<3$, and $\alpha=(3-\gamma)/(2\gamma)$. The choice of $\alpha$ (in terms of $\gamma$) is made in such a way that the first two terms in (\ref{EnergyFunctional}) scale as one over a length. Indeed, let us denote 
$$
K(\rho) \equiv \tilde{a}^2 \int_{\R^2} |\nabla \rho^{\alpha}|^\gamma \, \dd x + \tilde{b}^2 \int_{\R^2} \rho^{3/2} \, \dd x. 
$$
Given any trial function $\rho \in L^1(\R^2)$ and setting  $\rho_{\lambda} (x)  = \lambda^{2} \rho(\lambda x)$ (thus preserving the $L^1$ norm), it is simple to see that with our choice of $\alpha$ we have $K(\rho_{\lambda})= \lambda K(\rho)$. 

If we now introduce constants $a,b_{1},b_{2}>0$ so that
\begin{align}
 &\tilde{a}^{2}=\frac{a^\gamma C(\gamma)}{2\alpha\, \gamma} \label{eq:a_def}\\
 &\tilde{b}^{2}=\frac{b_{2}^{\delta} C(\delta)}{2\alpha \, \delta}+b_{1}^{2}\nonumber
\end{align}
(again with $\delta$ given by (\ref{eq:coeff})), we may use the proof of \cite[Lemma 2.5]{BeGaTu12} step by step. In particular,
\begin{equation*}
 \xi(\rho)\geq b_{1}^{2}\int_{\R^{2}}\rho^{3/2}\dd x-\int_{\R^{2}}V\rho~\dd x+ab_{2}\sum_{j=1}^{K}\int_{B_{j}}\left(\frac{1}{2|x-R_{j}|}-\frac{1}{D_{j}}\right)\rho(x)\dd x+
 D(\rho,\rho)+U,
\end{equation*}
where 
$$
D_j = \frac{1}{2} \min \{|R_k-R_j| \bigm| k \neq j \},$$
and $B_j$ is a disk with center $R_j$ and of radius $D_j$.

Thus as in \cite[Lemma 2.5]{BeGaTu12} we have that, for 
\begin{equation}\label{eq:z_cond}
z\leq ab_{2}/2,
\end{equation}
it holds
\begin{equation}\label{eq:main_est}
 \xi(\rho)\geq \sum_{j=1}^{K}\frac{1}{D_{j}}\left[\frac{z^{2}}{8}-\frac{4}{27b_{1}^{4}}\left(2z^{3}(\pi-1)+\pi a^{3}b_{2}^{3}\right)\right].
\end{equation}
Consequently we arrive at the following theorem.
\begin{theorem}\label{theo:stability}
 For all non-negative functions $\rho$ such that $\rho\in L^{3/2}(\R^{2})$ and $|\nabla\rho^\alpha|\in L^\gamma(\R^{2})$, we have that
\begin{equation}
\xi(\rho) \geq 0,
\label{EFgreaterthanzero}
\end{equation}
provided that 
\begin{equation}
z\leq \max_{\sigma\in(0,1)}h(\sigma)
\label{maxcharge}
\end{equation}
\begin{equation}
 h(\sigma)=\min\left\{\frac{a}{2}\left(\tilde{b}^{2}\, \frac{3-\gamma}{\gamma-1}\, C\bigg(\frac{\gamma}{\gamma-1}\bigg)^{-1} (1-\sigma)\right)^{(\gamma-1)/\gamma},\, \frac{27}{64}\,\frac{\tilde{b^{4}}}{5\pi-1}\sigma^2\right\},
\label{EquationForH}
\end{equation}
with $a$ given by (\ref{eq:a_def}). 
\end{theorem}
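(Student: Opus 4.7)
The plan is to push the analysis through the per-nucleus estimate (\ref{eq:main_est}) and then translate its non-negativity into the charge bound (\ref{maxcharge}). The reduction to (\ref{eq:main_est}) is a direct adaptation of \cite[Lemma~2.5]{BeGaTu12}, which the paper explicitly invokes; my emphasis is therefore on how, once (\ref{eq:main_est}) is in place, the function $h(\sigma)$ of (\ref{EquationForH}) emerges from optimizing two competing conditions on $z$.

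First, starting from the intermediate lower bound
\[
\xi(\rho)\geq b_{1}^{2}\int \rho^{3/2}\,\dd x - \int V\rho\,\dd x + ab_{2}\sum_{j}\int_{B_{j}}\left(\frac{1}{2|x-R_{j}|}-\frac{1}{D_{j}}\right)\rho\,\dd x + D(\rho,\rho) + U
\]
already supplied in the text, I would analyze each ball $B_j$ separately. The condition $z\leq ab_2/2$ dominates the worst Coulomb singularity of $-V$ at $R_j$, so that the non-negative leftover $(ab_2/2-z)\rho/|x-R_j|$ may be discarded. The off-diagonal nuclear attractions $-z\rho/|x-R_k|$ for $k\neq j$ on $B_j$, the term $-ab_2 D_j^{-1}\int_{B_j}\rho$, and the off-ball contribution $-\int_{(\cup_j B_j)^{c}}V\rho$ are then controlled by standard H\"older/Cauchy--Schwarz-type estimates against the reservoir $b_1^2\int\rho^{3/2}$ via Young's inequality, while $D(\rho,\rho)\geq 0$ and $U$ absorb the residual charge-charge cross contributions. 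All of these manipulations rely only on the scaling identity $K(\rho_\lambda)=\lambda K(\rho)$ secured by the choice of $\alpha$, and so proceed verbatim as in \cite{BeGaTu12}, producing the per-atom bound (\ref{eq:main_est}) with its explicit two-dimensional constants $\tfrac{z^2}{8}$, $\tfrac{4}{27 b_1^4}$, $2(\pi-1)$, $\pi$.

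Given (\ref{eq:main_est}), stability $\xi(\rho)\geq 0$ is equivalent to the non-negativity of the bracket, i.e.\ to
\[
\frac{z^{2}}{8}\geq \frac{4}{27 b_{1}^{4}}\bigl(2z^{3}(\pi-1)+\pi a^{3}b_{2}^{3}\bigr).
\]
Saturating (\ref{eq:z_cond}) by setting $z=ab_2/2$ yields $a^3 b_2^3=8z^3$, collapsing the right-hand side to $8z^{3}(5\pi-1)/(27 b_{1}^{4})$, so that stability reduces to $z\leq 27 b_1^4/(64(5\pi-1))$. Parameterizing the split of $\tilde{b}^{2}$ by $\sigma\in(0,1)$ via $b_{1}^{2}=\sigma\tilde{b}^{2}$ and $b_{2}^{\delta}C(\delta)/(2\alpha\delta)=(1-\sigma)\tilde{b}^{2}$, and using $2\alpha\delta=(3-\gamma)/(\gamma-1)$, $1/\delta=(\gamma-1)/\gamma$, the two conditions $z\leq ab_2/2$ and $z\leq 27 b_1^4/(64(5\pi-1))$ become precisely the two entries of the minimum defining $h(\sigma)$ in (\ref{EquationForH}). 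Maximizing over $\sigma\in(0,1)$ then gives the largest $z$ for which the argument closes, proving the theorem.

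The hard part is the first step: several competing off-diagonal and off-ball Coulomb contributions must be simultaneously absorbed into the single $L^{3/2}$ reservoir, and one must verify that the numerical constants $\tfrac{4}{27}$, $2(\pi-1)$, $\pi$ (encoding two-dimensional Coulomb geometry) are independent of $\gamma$. This independence holds because $\gamma$ enters only through the coefficients $\tilde{a}^{2}, \tilde{b}^{2}, a, b_{2}$ via (\ref{eq:a_def}) and through the scaling of $K$; the geometric integrals themselves are unchanged from the $\gamma=2$ case of \cite{BeGaTu12}.
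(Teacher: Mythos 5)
Your proposal is correct and follows essentially the same route as the paper: like the authors, you defer the derivation of (\ref{eq:main_est}) to the step-by-step adaptation of \cite[Lemma 2.5]{BeGaTu12} (enabled by the generalized Coulomb uncertainty principle), and your subsequent optimization — taking $b_2=2z/a$, splitting $\tilde b^2$ via $\sigma$, and using $2\alpha\delta=(3-\gamma)/(\gamma-1)$ and $1/\delta=(\gamma-1)/\gamma$ to identify the two entries of the minimum in (\ref{EquationForH}) — reproduces the paper's argument with the correct constants.
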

In order to arrive at (\ref{EquationForH}) we set $b_{2}$ in (\ref{eq:main_est}) to be the smallest possible under the condition (\ref{eq:z_cond}), i.e., $b_{2}=2z/a$, and we introduced $\sigma=b_{1}^{2}/\tilde{b}^{2}$.

\section{Proof of Theorem \ref{thm:BeTu}}

In this Section we give the proof of the main result of this paper, namely Theorem \ref{thm:BeTu}. We use an idea introduced by Lieb and Thirring in 1975 in their proof of the stability of matter \cite{LiTh75} (see also the review article \cite{Li76} and the recent monograph \cite{LiSe09}). This idea was first used in this context in \cite{BeGaTu12}. 

\begin{proof}[Proof of Theorem \ref{thm:BeTu}]
Consider the inequality (\ref{EFgreaterthanzero}), with $K=N$ (where $N$ is the number of electrons in our original system), $z=1$ (i.e., the charge of the electrons), and $R_i=x_i$ (for all $i=1, \dots, N$). 
With this choice, according to (\ref{maxcharge}), the inequality  (\ref{EFgreaterthanzero}) is valid as long  as $\tilde{a}$ and $\tilde{b}$ (that are now free parameters) satisfy the constraint, 
\begin{equation}
1 \le \max_{\sigma\in(0,1)}h(\sigma)
\label{eq:3.1}
\end{equation}
with $\sigma_{0}$ (which maximizes $h(\sigma)$) such that $h(\sigma_{0})=1$. 
Let us introduce $\epsilon>0$ and set $\sigma_{0}=1/(1+\epsilon)$. Then the smallest $\tilde{b}$ such that the assumptions of Theorem \ref{theo:stability} may be in principle fulfilled reads
\begin{equation}
\tilde{b}^{2}=\left(\frac{4}{3}\right)^{3/2} {\sqrt{5\pi-1}} \, (1+\epsilon).
\label{tildeb}
\end{equation}
Hence $a$ has to be chosen large enough, namely such that
$$
1=\frac{a}{2}\left(\tilde{b}^{2}\, \frac{3-\gamma}{\gamma-1}\, C\bigg(\frac{\gamma}{\gamma-1}\bigg)^{-1} \frac{\epsilon}{1+\epsilon}\right)^{(\gamma-1)/\gamma},$$
which due to (\ref{eq:a_def}) implies
\begin{equation}
 \tilde{a}^{2}=\frac{2^\gamma C(\gamma)}{3-\gamma}\left(\left(\frac{3}{4}\right)^{3/2}(5\pi-1)^{-1/2}\, \frac{1}{\epsilon}\,\frac{\gamma-1}{3-\gamma}\, C\bigg(\frac{\gamma}{\gamma-1}\bigg)\right)^{\gamma-1}.
\label{tildea}
\end{equation}
Since	
$$\lim_{\gamma \to 1+}C(\gamma)=\sqrt{2},\quad \lim_{\gamma \to 1+}\left(\frac{\gamma-1}{3-\gamma}\, C\bigg(\frac{\gamma}{\gamma-1}\bigg)\right)^{\gamma-1}=1,$$
we have (with a fixed $\epsilon$)
$$
\tilde{a}^{2}|_{\gamma \to 1+}=\sqrt{2}.
$$

Then take any normalized wavefunction $\psi(x_1,x_2, \dots, x_N)$, and multiply (\ref{EFgreaterthanzero})  by $| \psi(x_1, \dots, x_N)|^2$ and integrate  over all the electronic configurations, i.e., on $\R^{2N}$. Moreover, take $\rho=\rho_{\psi}(x)$. We get at once, 
\begin{equation}
 E(\psi)\equiv\langle\psi,\sum_{i<j}^{N}|x_{i}-x_{j}|^{-1}\psi\rangle-D(\rho_{\psi},\rho_{\psi})
 \geq 
 - \tilde{a}^2 \int_{\R^2} |\nabla \rho^{\alpha}|^\gamma \, \dd x - \tilde{b}^2 \int_{\R^2} \rho^{3/2} \, \dd x
\label{eq:3.4}
\end{equation}
provided $\tilde{a}$ and $\tilde{b}$ satisfy
 (\ref{tildea}) and (\ref{tildeb}), respectively.
\end{proof}

\begin{remark}
In general the two integral terms in (\ref{MainLowerBound}) are not comparable. If one takes a very rugged $\rho$, normalized to $N$, the gradient term may be very large while the other term can remain small. However, if one takes a smooth $\rho$, the gradient term can be very small as we illustrate in the example below. 
Let us denote
$$
L(\rho)=\int_{\R^2} \rho(x)^{3/2} \, \dd x
$$
and 
$$
G(\rho)=\int_{\R^2} (|\nabla \rho(x)^{\alpha}|)^{\gamma} \, \dd x.
$$
with $\alpha=(3-\gamma)/(2 \gamma)$. 
We will evaluate them for the normal distribution
 $$
 \rho(|x|)=C\mathrm{e}^{-A|x|^{2}}
 $$
where $C,\, A>0$. Some straightforward integration yields
 $$
 L=C^{3/2}\frac{2\pi}{3A},
 $$
while, 
$$
G=C^{\alpha \, \gamma} \pi 2^{\gamma} (A \alpha)^{(\gamma/2)-1} \Gamma\left(1+\frac{\gamma}{2}\right) \gamma^{-(\gamma/2)-1}. 
 $$ 
With $C=NA/\pi$,
$$
\int_{\R^{2}}\rho(|x|) \, \dd x=N,
$$
and we have
$$
\frac{G}{L}= {3} \left(\frac{\sqrt{2}}{\gamma} \right)^{\gamma} \left(\frac{\pi}{N}\right)^{\gamma/2} \Gamma\left(1+\frac{\gamma}{2}\right) (3-\gamma)^{(\gamma/2)-1},
$$
i.e., in the ``large number of particles''  limit, the $G$ term becomes negligible, for all $1< \gamma< 3$.
\end{remark}

\bigskip

\section*{Acknowledgments}

It is a pleasure to dedicate this manuscript  to Elliott Lieb on his eightieth birthday. The scientific achievements of  Elliott Lieb have inspired generations of Mathematical Physicists. 
This work has been supported by the Iniciativa Cient'fica Milenio, ICM (CHILE) project P07--027-F.   The work of RB has also been supported by FONDECYT (Chile) Project 1100679. The work of MT has also been partially supported by the grant 201/09/0811 of the Czech Science Foundation. 

\bibliographystyle{srt}

\begin{thebibliography}{21}


\bibitem{Be13}
R.~D.~Benguria, 
{\it Density Functional Theory}, in 
{\bf Encyclopedia of Applied and Computational mathematics} (B. Engquist, {\it et al}, Eds.), Springer-Verlag, Berlin, 2013. 


\bibitem{BeBlLo12}
R.~D.~Benguria, G.~A.~Bley, and M.~Loss, 
{\it An improved estimate on the indirect Coulomb Energy}, 
International Journal of Quantum Chemistry {\bf 112}, 1579--1584 (2012). 

\bibitem{BeLoSi07}
R.~D.~Benguria, M.~Loss, and H.~Siedentop, 
{\it Stability of atoms and molecules in an ultrarelativistic Thomas--Fermi--Weizs\"acker model}, 
J. Math. Phys. {\bf 49}, article 012302 (2008). 

\bibitem{BeGaTu12}
R.~D.~Benguria, P.~Gallegos,  and M.~Tu\v{s}ek,
{\it New Estimate on the Two-Dimensional Indirect Coulomb Energy},
Annales Henri Poincar\'e (2012). 

\bibitem{ChHa99}
G.~K.--L. Chan and N.~C.~Handy, 
{\it Optimized Lieb--Oxford bound for the exchange--correlation energy}, 
Phys. Rev. {\bf A 59}, 3075--3077  (1999). 

\bibitem{Di30}
P.~A.~M.~Dirac,
{\it Note on Exchange Phenomena in the Thomas Atom},
Mathematical Proceedings of the Cambridge Philosophical Society, 
{\bf 26}, 376--385 (1930). 

\bibitem{En87}
E.~Engel, 
{\it Zur relativischen Verallgemeinerung des TFDW modells},  
Ph.D. Thesis Johann Wolfgang Goethe
Universit\"at zu Frankfurt am Main, 1987. 

\bibitem{EnDr87} 
E.~Engel and R.~M.~Dreizler, 
{\it Field--theoretical approach to a relativistic Thomas--Fermi--Weizs\"acker model},
Phys. Rev. A {\bf 35},  3607--3618 (1987). 

\bibitem{EnDr88}
E.~Engel and R.~M.~Dreizler, 
{\it Solution of the relativistic Thomas--Fermi--Dirac--Weizs\"acker model for the case
of neutral atoms and positive ions}, 
Phys. Rev. A {\bf 38},  3909--3917 (1988).

\bibitem{HaSe01}
C.~Hainzl and R.~Seiringer, 
{\it Bounds on One--dimensional Exchange Energies with Applications to Lowest Landau Band Quantum Mechanics},
Letters in Mathematical Physics {\bf 55}, 133--142 (2001).

\bibitem{LePe93}
M.~Levy and J.~P.~Perdew, 
{\it Tight bound and convexity constraint on the exchange--correlation--energy functional in the low--density limit, and other formal  tests of generalized--gradient
approximations}, 
Physical Review B {\bf 48}, 11638--11645 (1993). 

\bibitem{Li76}  
E.~H.~Lieb, \textit{The stability of matter}, 
Rev.\ Mod.\ Phys.\ \textbf{48}, 553--569 (1976).



\bibitem{Li79}
E.~H.~Lieb, 
{\it A Lower Bound for Coulomb Energies},
Physics Letters {\bf 70 A}, 444--446 (1979).

\bibitem{Li81}
E.~H.~Lieb, 
{\it Thomas--Fermi and related theories of Atoms and Molecules},
Rev. Mod. Phys.  {\bf 53}, 603--641 (1981).

\bibitem{LiLo01}
E.~H. Lieb and M. Loss,
{\bf Analysis, Second Edition},
Graduate Texts in Mathematics, vol. 14, Amer. Math. Soc., RI, 2001. 



\bibitem{LiOx81}
E.~H.~Lieb and S.~Oxford,
{\it Improved Lower Bound on the Indirect Coulomb Energy},  
International Journal of Quantum Chemistry {\bf 19},  427--439  (1981). 

\bibitem{LiSe09}
E.~H.~Lieb and R.~Seiringer, 
{\bf The Stability of Matter in Quantum Mechanics}, 
Cambridge University Press, Cambridge, UK, 2009.

\bibitem{LiSoYn95}
E.~H.~Lieb, J.~P.~Solovej, and J.~Yngvason, 
{\it Ground States of Large Quantum Dots in Magnetic Fields}, 
Physical Review B {\bf 51}, 10646--10666 (1995). 

\bibitem{LiTh75}
E.~H.~Lieb and W.~Thirring, 
{\it Bound for the Kinetic Energy of Fermions which Proves the Stability of Matter}, 
Phys. Rev. Lett. {\bf 35}, 687--689 (1975); Errata {\bf 35}, 1116 (1975). 

\bibitem{Mo06}
J.~D.~Morgan~III, 
{\it Thomas--Fermi and other density functional theories}, in 
{\bf Springer handbook of atomic, molecular, and optical physics, vol. 1}, pp. 295--306, edited by G.W.F. Drake, Springer--Verlag, NY, 2006. 

\bibitem{NaPoSo11}
P.--T.~Nam, F.~Portmann,  and J.~P.~Solovej,
{\it Asymptotics for two dimensional Atoms}, 
preprint, 2011. 


\bibitem{On39}
L. Onsager, 
{\it Electrostatic Interactions of Molecules},
J. Phys. Chem. {\bf 43} 189--196 (1939). [Reprinted in {\bf The collected works of Lars Onsager (with commentary)}, World Scientific Series in 20$^{th}$ Century Physics, vol. 17, pp. 684--691, Edited by P.C. Hemmer, H. Holden and S. Kjelstrup Ratkje, World Scientific Pub., Singapore, 1996.]

\bibitem{Pe91}
J.~P. Perdew, 
{\it Unified Theory of Exchange and Correlation Beyond the Local Density Approximation}, in
{\bf Electronic Structure of Solids '91}, pp. 11--20, edited by P. Ziesche and H. Eschrig, Akademie Verlag,
Berlin, 1991.



\bibitem{PeBuEr96}
J.~P.~Perdew, K.~Burke, and M.~Ernzerhof,
{\it Generalized Gradient Approximation Made Simple}, 
Phys. Rev. Letts. {\bf 77}, 3865--3868 (1996).

\bibitem{RaPiCaPr09}
E.~R\"as\"anen, S.~Pittalis, K.~Capelle, and C.~R.~Proetto, 
{\it Lower bounds on the Exchange--Correlation Energy in Reduced Dimensions},
Phys. Rev. Letts. {\bf 102}, article 206406 (2009).



\bibitem{RaSeGo11}
E.~R\"as\"anen, M.~Seidl, and P.~Gori--Giorgi, 
{\it Strictly correlated uniform electron droplets},
Phys. Rev. B {\bf 83}, article 195111 (2011).


\bibitem{StWe71}
E.~M.~Stein and  G.~Weiss, 
{\bf Introduction to Fourier Analysis on Euclidean Spaces}, 
Princeton University Press, Princeton, NJ, 1971.

\bibitem{VeMeTr09}
A.~Vela, V.~Medel, and S.~B.~Trickey, 
{\it Variable Lieb--Oxford bound satisfaction in a generalized gradient exchange--correlation functional},
The Journal of Chemical Physics {\bf 130}, 244103 (2009). 

\end{thebibliography}

\end{document}